\documentclass{article}
\usepackage{graphicx}
\usepackage{amsmath}
\usepackage{amsthm}
\usepackage{amssymb}
\usepackage{url}
\newcommand{\be}{\begin{equation}}
\newcommand{\ee}{\end{equation}}
\newcommand{\BC}{\mathbb{C}}
\newcommand{\BR}{\mathbb{R}}

\newcommand{\BP}{\mathbb{P}}
\newcommand{\bx}{\mathbf{x}}
\newcommand{\bn}{\mathbf{n}}
\newcommand{\diag}{\mathrm{diag}}
\newtheorem{Lemma}{Lemma}
\newtheorem{proposition}{Proposition}
\newtheorem{remark}{Remark}

\newtheorem{corollary}{Corollary}
\begin{document}
\title{GEOMETRY AND SHAPE OF MINKOWSKI'S SPACE CONFORMAL INFINITY}
\author{Arkadiusz Jadczyk\\
Center CAIROS, Institut de Math\'{e}matiques de
    Toulouse\\Universit\'{e} Paul Sabatier, 31062 TOULOUSE CEDEX� 9, France\\
    email: arkadiusz.jadczyk@cict.fr}
\maketitle
\begin{abstract}
We review and further analyze Penrose's 'light cone at infinity' - the conformal closure of Minkowski space. Examples of a potential confusion in the existing literature about it's geometry and shape are pointed out. It is argued that it is better to think about conformal infinity as of a needle horn supercyclide (or a limit horn torus) made of a family of circles, all intersecting at one and only one point, rather than that of a 'cone'. A parametrization using circular null geodesics is given. Compactified Minkowski space is represented in three ways: as a group manifold of the unitary group U(2), a projective quadric in six-dimensional real space of signature (4,2), and as the Grassmannian of maximal totally isotropic subspaces in complex four--dimensional twistor space. Explicit relations between these representations are  given, using a concrete representation of antilinear action of the conformal Clifford algebra Cl(4,2) on twistors. Concepts of space-time geometry are explicitly linked to those of Lie sphere geometry. In particular conformal infinity is faithfully represented by planes in 3D real space plus the infinity point. Closed null geodesics trapped at infinity are represented by parallel plane fronts (plus infinity point). A version of the projective quadric in six-dimensional space where the quotient is taken by positive reals is shown to lead to a symmetric Dupin's type `needle horn cyclide' shape of conformal infinity.
\vskip12pt
\noindent Keywords: Minkowski space; space-time; conformal group; twistors; infinity; Clifford algebra; cyclide; null geodesics; light cone; Lie sphere geometry.\\
\noindent Mathematics Subject Classification 2000: 83A05, 81R25, 53A30, 14M99
\end{abstract}
\section{Introduction\label{sec:intro}}
A persistent confusion about Minkowski's space conformal infinity started with a widely quoted paper by Roger Penrose `\textit{The light cone at infinity\,} \cite{penrose1}.  In the abstract to this seminal paper Penrose wrote:
\begin{quotation}
From the point of view of the conformal structure of space-time, ``points at infinity" can be treated on the same basis as finite points. Minkowski space can be completed to a highly symmetrical conformal manifold by the addition of a null cone at infinity - the ``absolute cone".
\end{quotation}
He then elaborated in the main text:
\begin{quotation}
Let $x^\mu$ be the position vector of a general event in Minkowski space-time relative to a given origin $O.$ Then the transformation to new Minkowskian coordinates $\hat{x}^\mu$ given by
\be
\hat{x}^\mu=\frac{x^\mu}{x_\alpha x^\alpha},\quad x^\mu=\frac{\hat{x}^\mu}{\hat{x}_\alpha \hat{x}^\alpha},\ee
is conformal (``inversion with respect to $O$"). Observe that the whole null cone of $O$ is transformed to infinity in the $\hat{x}^\mu$ system and that infinity in the $x^\mu$
system becomes the null cone of the origin $\hat{O}$ of the $\hat{x}^\mu$ system. (``Space--like" or ``time-like" infinity become $\hat{O}$ itself but ``null" infinity becomes spread out over the null cone of $O.$) Thus, from the conformal point of view ``infinity" must be a null cone.
\end{quotation}
Penrose's statement, ``\textit{that infinity in the $x^\mu$
system becomes the null cone of the origin $\hat{O}$ of the $\hat{x}^\mu$ system}" apparently had a confusing effect even on some experts in the field. For instance, in the monograph \cite[p. 127]{akivis}, we find the statement that ``\textit{`conformal infinity' is the result of the conformal inversion of the light cone at the origin of $M,$}" and in another monograph  Huggett and Tod write about the compactified Minkowski space $M^c$ \cite[p. 36]{tod}: ``\textit{Thus $M^c$ consists of $M$ with an extra null cone added at infinity.\,}" Not only they write so in words, but they also miss a part of the conformal infinity (the closing two--sphere) in their, otherwise excellent and clear, formal analysis.

This apparent  confusion has been described in \cite{ajci}, where also a deeper analysis of the structure of the conformal infinity has been given using, in particular, Clifford algebra techniques.
In \cite{aj9a} a close similarity has been noticed between the geometry and shape of the conformal infinity with that of a Dupin's type (super)cyclide. In the present paper we review and develop these ideas further on, and also make a step in relating them to Lie sphere geometry in $\BR^3$ developed by Sophus Lie \cite{lie1}, Wilhelm Blaschke \cite{blaschke3} and Thomas E. Cecil \cite{cecil1}.

In section \ref{sec:intro} we introduce the compactified Minkowski space $M^c$ (via Cayley's transform) following Armin Uhlmann \cite{uhlmann63}, as the group manifold of the unitary group $U(2),$ and the conformal infinity as the subset of $U(2)$ consisting of those matrices $U\in U(2)$ for which $\det (U-I)=0.$ In section \ref{sec:su22} we review the relation of the compactified Minkowski space and it's conformal infinity part to the group $SU(2,2)$ (the spin group of the conformal group), and to it's action on $U(2)$ via fractional linear transformations $U'=(AU+B)(CU+D)^{-1}.$ In particular the role of totally isotropic subspaces of $\BC^{2,2}$ (as null geodesics and as points of $M^c$) is elucidated there. In section \ref{sec:o42} the $SU(2,2)$ formalism is explicitly related to the $O(4,2)$ representation via a particular matrix realization (by antilinear transformations) of the Clifford algebra $Cl_{4,2}.$ The main results of this section are contained in Proposition \ref{pro:x} and Corollary \ref{cor:cor1}, where an explicit formula for a bijective map between the projective quadric of $\BR^{4,2}$ and $U(2)$ is given - cf. Eq. (\ref{eq:u}). Our conventions are: coordinates $x^\mu,\, \mu=1,..,4,$ with $x^4=ct,$ for the Minkowski space, $x^\alpha,\, \alpha=1,...,6$ for $\BR^{4,2}$ endowed with the quadratic form $Q(x)=(x^1)^2+(x^2)^2+(x^3)^2-(x^4)^2+(x^5)^2-(x^6)^2.$ Let $Q=\{x\in \BR^{4,2}:\, Q(x)=0,\, x \neq 0\},$ We discuss two equivalence relations in $\BR^{4,2}:$ the standard one in projective geometry, $x\sim y$ iff $x=\lambda y,\, \lambda\in \BR^*=\BR\setminus\{0\},$ and a stronger one $x\approx y$ iff $x=\lambda y,\, \lambda >0.$ In section \ref{sec:dci} we discuss $\tilde{M}^c,$ the double covering of $M^c,$ defined as $Q/\approx, $ and the corresponding conformal infinity. Skipping one space dimension, and projecting from four dimensions on a 3D box, the conformal infinity has the shape of an elliptic supercyclide as depicted in Fig. 1. Simple conformal infinity, that of $M^c,$  is discussed in section \ref{sec:sci} where we represent it in two ways: as an asymmetric  needle cyclide in Fig. 2, and as a symmetric limit torus in Fig. 3. In section \ref{sec:ls}, in particular cf. Table 1 adapted from \cite{cecil1}, the correspondence between the objects of the space of Lie spheres and those of $\BR^{4,2}$ geometry is described, and then used for elucidating the $\BR^3$ picture of conformal infinity. A null geodesic trapped at infinity can be represented as a family of plane fronts in $\BR^3$ - cf. Fig. 4, or, equivalently, as a path on the supercyclide intersecting its cusp - Fig. 5. The family of such null geodesics essentially determines the geometry of the conformal infinity which carries a natural conformal structure of signature $(2,0).$
\section{Minkowski's space conformal infinity}
Albert Einstein introduced the Minkowski space as the `affine space of events' equipped with the Minkowskian infinitesimal line element $ds^2=(dx^1)^2+(dx^2)^2+(dx^3)^2-(dx^4)^2$, and this is the most popular image today.\footnote{We use $x^4=ct$ rather than more popular $x^0.$} `Affine' means that there is no distinguished `origin', though each inertial observer selects one particular event as having all four coordinates zero in the coordinate system of his frame of reference. Mathematically equivalent is another approach: Minkowski space is a four-- dimensional real vector space, endowed with the quadratic form $q(x)=(x^1)^2+(x^2)^2+(x^3)^2-(x^4)^2,$ but when studying its geometry we are looking for geometrical objects, concepts and constructions that are invariant under the full 10-parameter Poincar\'{e} group consisting of Lorentz transformations and translations.
Poincar\'{e}'s group is the fundamental symmetry group of all relativistic theories. But, in fact, this very group appeared naturally in the works of geometers of the XIX-th century studying the `space of (Lie) spheres' in $\BR^3$, cf. \cite{lie1,cecil1}, in a way that had nothing to do with the philosophy of relativity.

Let us introduce the notation that will be used in the following. Minkowski space will be denoted, alternatively, either as $M,$ or as $E^{3,1},$ or as $\BR^{3,1}.$ We will represent it as a vector space endowed with the scalar product $(x,y)=x^1y^1+x^2y^2+x^3y^3-x^4y^4.$ Introducing the metric tensor
$\eta=\mbox{diag}(1,1,1,-1),$ the scalar product is written as $(x,y)=\eta_{\mu\nu}\,x^\mu y^\nu=\eta^{\mu\nu}x_\mu y_\nu.$ The Lorentz group $L=O(3,1)$ is the group of all $4\times4$ real matrices $\Lambda$ for which ${}^t\Lambda\,\eta\,\Lambda=\eta.$ It acts on $M$ via linear transformations $x^\mu\mapsto {\Lambda^\mu}_\nu\,x^\nu.$ Translation group $T,$ isomorphic to the additive group of $\BR^4,$ acts on $M$ via $x^\mu\mapsto x^\mu+a^\mu.$ The Poincar\'{e} group $P$ is the semidirect product of $L$ and $T.$ It consists of pairs $(a,\Lambda),$ and acts on $M$ via $x^\mu\mapsto {\Lambda^\mu}_\nu x^\nu+a^\mu.$ That implies the composition law of the semidirect product: $(a,\Lambda)(a',\Lambda')=(a+\Lambda a', \Lambda\Lambda').$

In quantum theory we are interested in ray representations of the Po\-in\-ca\-r\'{e} group on complex vector spaces. Ray representations lead to vector representations of the double covering group. This way we are led from the Lorentz group to its double covering group - $SL(2,\BC),$ the group of unimodular (i.e. of determinant one) complex  $2\times 2$ matrices. It's action on $M$ is then conveniently coded via standard Hermitian Pauli's matrices $\sigma_\mu,$ where we put $\sigma_4=\left(\begin{smallmatrix}1&0\\0&1\end{smallmatrix}\right)=I.$  The mapping $x\mapsto \sigma(x)=x^\mu\sigma_\mu$ maps bijectively $M$ onto the space of $2\times 2$ Hermitian matrices, with the important property that $q(x)=\det(\sigma(x)). $ If $A\in SL(2,\BC),$ then $A\sigma(x)A^\dagger$ is Hermitian, thus $A\sigma(x)A^\dagger=\sigma(x'),$ and since $\det(\sigma(x'))=\det(\sigma (x)),$ we have $q(x)=q(x').$ It follows that $x'$ is related to $x$ by a Lorentz transformation: $A\sigma(x)A^\dagger=\sigma(\Lambda(A)x).$ The mapping $SL(2,\BC)\ni A\mapsto \Lambda(A)\in L$ is then a group homomorphism from $SL(2,\BC)$ onto the connected component of identity of $L,$ with kernel $\{I,-I\}.$

There are two simple ways in which Hermitian matrices can be transformed into unitary matrices. The first one is by exponentiation: $X\mapsto\exp(iX).$ It is not very interesting here, as it is periodic. The second way, more interesting in the present context, is by Cayley's transform $X\mapsto u(X)=U=\frac{X-iI}{X+iI}.$ The inverse transform $u^{-1}(U)=X=i\,\frac{I+U}{I-U}$ is well defined whenever $\det(I-U)\neq 0.$ The space $U(2)$ of $2\times 2$ (complex) unitary matrices is a four--dimensional (real) compact manifold, and Cayley's transform maps $M$ onto an open dense submanifold of $U(2).$ The remaining part, described by the algebraic equation $\det(U-I)=0$ is what is being called \textit{the conformal infinity of $M$} \cite{uhlmann63}.
\section{The group $SU(2,2)$ \label{sec:su22}}
Early in the XX-th century (1909--1910) Bateman and Cunningham \cite{bateman1,cunningham1,bateman2} established local invariance of the wave equation and of Maxwell's equations under conformal transformations. The central role in these transformations is being played by the conformal inversion $R$, formally defined by \be R:\, (\bx,t)\mapsto r_0^2\frac{(\bx,t)}{\bx^2-c^2t^2},\label{eq:ci}\ee where $r_0$ is a constant of physical dimension of length. Conformal inversion is singular on the light cone $q(x)=\bx^2-c^2t^2=0.$ Together with Poincar\'{e} group transformations, it generates the conformal group of local transformations of $M,$ isomorphic to $O(4,2).$
The spin group for the conformal group, in our settings the group $SU(2,2),$ enters the scene through the following observations.

Let $ G $ be the matrix $ G =\diag(1,1,-1,-1).$ Then $U(2,2)$ is the group of $4\times 4$ complex matrices $\mathcal{U}$ with the property $\mathcal{U} G \mathcal{U}^\dagger= G ,$ where ${}^\dagger$ denotes the Hermitian conjugation. Writing $\mathcal{U}$ in the $2\times 2$ block matrix form as $\mathcal{U}=\left(\begin{smallmatrix}A&B\\C&D\end{smallmatrix}\right),$ the condition $\mathcal{U} G \mathcal{U}^\dagger= G ,$ translates into $A^\dagger A-C^\dagger C=D^\dagger D-B^\dagger B=I$ and $A^\dagger B-C^\dagger D=0.$ The group $SU(2,2)$ acts naturally on $U(2)$ by fractional linear transformations:
\be U\mapsto U'=(AU+B)(CU+D)^{-1}.\label{eq:action}\ee
Namely, with some little effort, one can show that if $U$ is unitary, then $CU+D$ is invertible and that $(AU+B)(CU+D)^{-1}$ is again unitary. Evidently the matrix $(AU+B)(CU+D)^{-1},$ is insensitive to the overall complex phase of $\mathcal{U},$ therefore, effectively, we can restrict ourself to the subgroup $SU(2,2)$ by requiring $\det(\mathcal{U})=1.$ This way the compactified Minkowski space, which we will denote as  $M^c,$  the group manifold of $U(2),$ becomes a homogeneous space for the group $SU(2,2).$\footnote{In fact, $M^c$ is the Shilov's boundary of the bounded homogeneous complex domain $SU(2,2)/S(U(2)\times U(2)),$ cf. e.g. \cite{cj}, but we will not need this fact and its consequences here.}


Now, having the group $U(2),$ with its distinguished group identity element $U_0=I,$ as a homogeneous space does not look very natural. Therefore, taking the group $SU(2,2)$ (or a group isomorphic to it) as a basic element, a more abstract and more `basic'  construction is needed. To this end one may  choose a coordinate free construction, starting from what is often called `the twistor space'\footnote{For a clear, concise and mathematically precise introduction see e.g. \cite{wk}, also references therein.} Thus let $V$ be a complex vector space equipped with a pseudo-Hermitian form, written as $\langle v|w\rangle,$ of signature $(2,2).$ A basis $e_i$ in $V$ is called orthonormal if $\langle e_i|e_j\rangle = G _{ij},\, (i,j=1,..,4).$ Any two orthonormal bases $e',e$ are then related by a $U(2,2)$ transformation $e'_i=e_j\,{\mathcal{U}^j}_i.$ In order to be able to reduce the transformation group to $SU(2,2)$ a volume form $\omega$ is selected in $\bigwedge^4 V,$ and the set of orthonormal bases is reduced to those having the property $e_1\wedge ...\wedge e_4=\omega.$ The relation to space--time geometry is now obtained via the study of one- and two--dimensional totally isotropic subspaces of $V.$\footnote{One could think that the term \textit{isotropic} subspace should be enough, since if a subspace has all its vectors isotropic, then any two its vectors must be, automatically, orthogonal. However, in the literature, by an isotropic subspace one usually means `a subspace that contains a non-zero isotropic vector.' Therefore, in order to avoid the confusion, the additional adjective `totally' is needed for a subspace whose any two vectors are mutually orthogonal.} Two--dimensional totally isotropic subspaces of $V$ correspond to points in the compactified Minkowski space $M^c,$  while one--dimensional isotropic subspaces of $V$ correspond to `null geodesics' in $M^c$ \cite{wk}. This correspondence has a remarkable geometric simplicity and beauty: if $v$ is an isotropic vector representing a null geodesic in $M^c$, then the set of all totally isotropic subspaces containing $v$ is the set of points in $M^c$ on this geodesic. If $W$ is a two--dimensional totally isotropic subspace representing a point $p$ in $M^c,$ then nonzero vectors (automatically isotropic) of $W$ are null geodesics through $p$. If two isotropic planes intersect - then the corresponding points in $M^c$ can be connected by a null geodesic. If two isotropic vectors in $V$ are mutually orthogonal, the corresponding geodesics intersect.
\subsection{Relation between $U(2)$ and $SU(2,2)$ pictures}
In this subsection we will describe the relation between the two pictures of $M^c, $ one as the set of all $2\times 2$ unitary matrices, and one as the set of totally isotropic planes in $V.$  To this end we choose an orthonormal basis $e_i$ in $V$ and split $V$ into a direct sum $V=\BC^2\oplus\BC^2.$ Thus each vector in $V$ can be written as column $\left(\begin{smallmatrix}u\\v\end{smallmatrix}\right)$ with $u$ being a linear combination of $e_1,e_2,$ and  $v$ of $e_3,e_4.$ It is then easy to see that each totally isotropic subspace of $V$ is uniquely represented in the form $\left(\begin{smallmatrix}Uv\\v\end{smallmatrix}\right),$ where $v$ runs through $\BC^2$ spanned by $e_3,e_4,$ and $U$ is a unitary operator in this space. Moreover, if $\left(\begin{smallmatrix}A&B\\C&D\end{smallmatrix}\right),$ is in $U(2,2),$ then $$\left(\begin{smallmatrix}A&B\\C&D\end{smallmatrix}\right)\left(\begin{smallmatrix}Uv\\v\end{smallmatrix}\right)=\left(\begin{smallmatrix}(AU+B)v\\(CU+D)v\end{smallmatrix}\right)=\left(\begin{smallmatrix}U'v'\\v'\end{smallmatrix}\right),$$
where $U'=(AU+B)(CU+D)^{-1},$ and $v'=(CU+D)v.$ Since, as we mentioned before, $CU+D$ is necessarily invertible, $v'$ runs through the whole $\BC^2$ when $v$ does so.
\section{$\BR^{4,2}$ and the group $O(4,2)$ \label{sec:o42}}
Let $\BR^{4,2}$ be $\BR^6$ endowed with the quadratic form $Q(x)=(x^1)^2+(x^2)^2+(x^3)^2-(x^4)^2+(x^5)^2-(x^6)^2$ and the associated pseudo-Hermitian form $\langle x,y\rangle =x^1y^1+x^2y^2+x^3y^3-x^4y^4+x^5y^5-x^6y^6.$
We start with the following Proposition essentially taken from \cite{aj9a}, and refer the Reader there for more details, though, in fact, the proof is nothing but a somewhat tedious, simple calculation.\footnote{The author does not know whether these properties are known to the experts or not. Any hint to the existing literature will be appreciated.}
\begin{proposition}
Consider the following set of six complex $4\times 4$ matrices:
$$ \Gamma_1=\left(\begin{smallmatrix}
 0 & 0 & i & 0 \\
 0 & 0 & 0 & -i \\
 i & 0 & 0 & 0 \\
 0 & -i & 0 & 0
\end{smallmatrix}
\right)\, \Gamma_2=\left(
\begin{smallmatrix}
 0 & 0 & 1 & 0 \\
 0 & 0 & 0 & 1 \\
 1 & 0 & 0 & 0 \\
 0 & 1 & 0 & 0
\end{smallmatrix}
\right)\,
\Gamma_3=\left(
\begin{smallmatrix}
 0 & 0 & 0 & -i \\
 0 & 0 & -i & 0 \\
 0 & -i & 0 & 0 \\
 -i & 0 & 0 & 0
\end{smallmatrix}
\right)$$
$$\Gamma_4=\left(
\begin{smallmatrix}
 0 & i & 0 & 0 \\
 -i & 0 & 0 & 0 \\
 0 & 0 & 0 & i \\
 0 & 0 & -i & 0
\end{smallmatrix}
\right)\quad \Gamma_5=\left(
\begin{smallmatrix}
 0 & 0 & 0 & -1 \\
 0 & 0 & 1 & 0 \\
 0 & 1 & 0 & 0 \\
 -1 & 0 & 0 & 0
\end{smallmatrix}
\right)\quad
\Gamma_6=\left(
\begin{smallmatrix}
 0 & 1 & 0 & 0 \\
 -1 & 0 & 0 & 0 \\
 0 & 0 & 0 & -1 \\
 0 & 0 & 1 & 0
\end{smallmatrix}
\right).$$
For each $x=(x^1,...,x^6)\in \BR^{4,2},$ let $X$ be the matrix
\be X=\sum_{\alpha=1}^6 x^\alpha \Gamma_\alpha,\label{eq:X}\ee
Then a straightforward calculation shows that these matrices satisfy the following relations:
\begin{enumerate}
\item[{\rm (i)}] $GXG = -\,{}^tX,$
\item[{\rm (ii)}] $\bar{{X}^i}_j=\frac{1}{2}\epsilon^{imnk}\,G_{mj}G_{nl}\;{X^l}_k,$
\item[{\rm (iii)}] $X\bar{Y}+Y\bar{X}=2\langle x,y\rangle,$
\item[{\rm (iv)}] $\det(X)=Q(x)^2,$
\item[{\rm (v)}]If $R\in SU(2,2),$ then $R\Gamma_\alpha R^{-1}=\Gamma_\beta {L(R)^\beta}_\alpha,$ and $R\mapsto L(R)$ is a group homomorphism from $SU(2,2)$ onto the connected component of identity $SO_+(4,2),$ with kernel $\{1,-1,i,-i\}.$
\item[{\rm (vi)}] The 15 matrices $L_{\alpha\beta}=\Gamma_\alpha\bar{\Gamma}_\beta-\Gamma_\beta\bar{\Gamma}_\alpha,$ $\alpha<\beta,$ form a basis of the Lie algebra of $SU(2,2).$
\end{enumerate}
\label{pro:x}\end{proposition}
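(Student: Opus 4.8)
The unifying principle is that every entry of $X=\sum_\alpha x^\alpha\Gamma_\alpha$ is $\BR$-linear in the real coordinates $x^\alpha$, so each of (i)--(iv) is a polynomial identity in $x$ that reduces to finitely many relations among the six generators. First I would dispose of (i) and (ii), both linear in $X$, by checking them on each $\Gamma_\alpha$: writing $G=\mathrm{diag}(I_2,-I_2)$ in $2\times2$ blocks makes $GXG$ merely flip the off-diagonal blocks, so (i) becomes the assertion that the two diagonal blocks of $X$ are antisymmetric while the off-diagonal blocks are mutual transposes --- a one-line check per matrix. Conceptually (i) shows that $\mathcal S:=\mathrm{span}_\BR\{\Gamma_\alpha\}$ lies inside the six-complex-dimensional space $\mathcal S_\BC=\{Y:\,GYG=-{}^tY\}$, and (ii) is exactly the reality condition singling out the real six-dimensional slice $\mathcal S$. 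For (iii) I would use polarization: since $X\bar Y+Y\bar X$ is symmetric under $x\leftrightarrow y$, (iii) is equivalent to $\Gamma_\alpha\bar\Gamma_\beta+\Gamma_\beta\bar\Gamma_\alpha=2g_{\alpha\beta}I$ with $g=\mathrm{diag}(1,1,1,-1,1,-1)$, a finite computation best organized by noting that the antilinear operators $\hat\Gamma_\alpha:=\Gamma_\alpha\circ\mathrm{conj}$ on $\BC^4\cong\BR^8$ are then a genuine real representation of $Cl_{4,2}$.

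For (iv) I would avoid the brute-force $4\times4$ determinant and instead set $y=x$ in (iii) to get $X\bar X=Q(x)I$; taking determinants gives $|\det X|=Q(x)^2$ as polynomials on $\BR^6$. To fix the phase I note that $\det X$ is a single polynomial whose value at one convenient point (e.g. $x=(0,0,0,0,1,0)$, where $X=\Gamma_5$ and $\det\Gamma_5=+1=Q(x)^2$), together with continuity on the connected locus $Q\neq0$, forces $\det X=+Q(x)^2$ everywhere.

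The substantive part is (v), and the care lies entirely in the antilinear nature of the representation. Conjugating $\hat\Gamma_\alpha=\Gamma_\alpha\circ\mathrm{conj}$ by a $\BC$-linear $R$ gives $R\hat\Gamma_\alpha R^{-1}=(R\Gamma_\alpha\bar R^{-1})\circ\mathrm{conj}$, so the covariance of (v) must be read with the linear part $R\Gamma_\alpha\bar R^{-1}$. My plan is: (a) using $\bar R=G\,{}^tR^{-1}G$ (equivalent to $R^\dagger GR=G$) together with (i), a short manipulation shows $G(R\Gamma_\alpha\bar R^{-1})G=-{}^t(R\Gamma_\alpha\bar R^{-1})$, so $R\Gamma_\alpha\bar R^{-1}\in\mathcal S_\BC$ and one may write $R\Gamma_\alpha\bar R^{-1}=\Gamma_\beta\,L(R)^\beta{}_\alpha$ with a priori complex $L(R)$; (b) $L$ is a homomorphism, immediate from $\overline{R_1R_2}=\bar R_1\bar R_2$ and linearity; (c) to see the coefficients are real, observe that conjugation preserves $X\bar X=Q(x)I$, so $X'\overline{X'}$ is scalar, and expanding via $\Gamma_\alpha\bar\Gamma_\beta=g_{\alpha\beta}I+\tfrac12 L_{\alpha\beta}$ and the independence of the $L_{\alpha\beta}$ from (vi) forces $L(R)$ to be a phase times a real matrix --- and since $SU(2,2)$ is perfect, this $U(1)$-valued phase is a trivial homomorphism, so $L(R)$ is real and $Q$-preserving; connectedness of $SU(2,2)$ with $L(I)=I$ then places $L(R)\in SO_+(4,2)$; (d) surjectivity follows because a discrete kernel makes $\mathrm{d}L$ an isomorphism of the two $15$-dimensional Lie algebras, so $L$ is a submersion whose image is an open, hence closed, connected subgroup of the connected $SO_+(4,2)$. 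I expect the subtlest point to be the kernel. A scalar $\lambda I$ acts by $\Gamma_\alpha\mapsto(\lambda/\bar\lambda)\Gamma_\alpha$, and more generally $R\in\ker L$ commutes with every product $\Gamma_\alpha\bar\Gamma_\beta$, which by (vi) span $M_4(\BC)$; Schur then gives $R=\lambda I$ with $\lambda/\bar\lambda=1$, i.e. $\ker L=\{\pm I\}$ for the action on $\BR^{4,2}$. The full centre $\{1,-1,i,-i\}$ of $SU(2,2)$ appears as the kernel of the induced action on the projective quadric $M^c$, since $\pm iI$ map to $-I_6\in SO_+(4,2)$, which acts trivially after passing to $Q/\!\sim$; reconciling these two kernels is the main thing to get right.

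Finally (vi) is a dimension count once (v) is available. There are $\binom62=15$ matrices $L_{\alpha\beta}$, matching $\dim\mathfrak{su}(2,2)=15$, so it suffices to show they are linearly independent members of $\mathfrak{su}(2,2)$. Membership is the direct check that $\mathrm{tr}\,L_{\alpha\beta}=0$ and $GL_{\alpha\beta}+L_{\alpha\beta}^\dagger G=0$, using the generator Hermiticity relations $\Gamma_\alpha^\dagger=s_\alpha\,G\Gamma_\alpha G$ (with signs $s_\alpha$ read off the matrices) together with (iii); one finds each $L_{\alpha\beta}$ is either Hermitian and anticommuting with $G$, or anti-Hermitian and commuting with $G$, so in either case it lies in $\mathfrak{su}(2,2)$. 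Independence is cleanest through (v): differentiating the covariance shows $\mathrm{d}L(L_{\alpha\beta})$ is a nonzero multiple of the generator of rotations in the $\alpha\beta$-plane of $\BR^{4,2}$, and these coordinate-plane rotations are manifestly independent in $\mathfrak{so}(4,2)$. Fifteen independent vectors in a fifteen-dimensional space form a basis.
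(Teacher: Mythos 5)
Your overall strategy is sound and, frankly, more informative than the paper's own treatment: the paper offers no argument beyond declaring the proposition ``a somewhat tedious, simple calculation'' and deferring to an earlier preprint, so the entry-by-entry verification is the intended route, and your proposal is a genuinely different, structured organization of it. Your reductions --- checking the linear identities (i)--(ii) generator by generator via the block form $G=\diag(I_2,-I_2)$, polarizing (iii) down to $\Gamma_\alpha\bar\Gamma_\beta+\Gamma_\beta\bar\Gamma_\alpha=2g_{\alpha\beta}I$, and deriving (iv) from $X\bar X=Q(x)I$ rather than expanding a $4\times4$ determinant --- all work, and they hang the computation on the antilinear Clifford-map structure that the paper only mentions in its Remark. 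Your reading of (v) as covariance of the antilinear operators, i.e. $R\Gamma_\alpha\bar R^{-1}=\Gamma_\beta\,L(R)^\beta{}_\alpha$, is the correct one: literal matrix conjugation $R\Gamma_\alpha R^{-1}$ does not preserve $\mathrm{span}_\BR\{\Gamma_\alpha\}$ (try $R=\diag(e^{i\theta},e^{-i\theta},1,1)$ on $\Gamma_2$: the $(1,3)$ and $(3,1)$ entries come out as $e^{i\theta}$ and $e^{-i\theta}$, which no real combination of the $\Gamma$'s can match). Your observation about the kernel is a genuine catch: with this covariance $iI\mapsto -I_6$, which lies in $SO_+(4,2)$ (both restrictions have determinant $+1$) but is not the identity, so $\ker L=\{\pm I\}$, and the four-element centre is the kernel only of the induced action on the projective quadric. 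As printed, the proposition conflates the two-to-one cover $SU(2,2)\to SO_+(4,2)$ with the four-to-one cover of the conformal group of $M^c$.

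Two soft spots remain. First, the phase-fixing in (iv) does not work as stated: the locus $Q\neq 0$ is \emph{not} connected (it is the disjoint union of $Q>0$ and $Q<0$), and even on a connected piece, continuity of a unimodular ratio does not force it to be constant. The clean repair: $\det X\cdot\overline{\det X}=Q(x)^4$ holds as an identity of polynomials; since $Q$ is irreducible over $\BC$ and $\BC[x^1,\dots,x^6]$ is a UFD, $\det X=cQ^2$ with $c$ a unimodular constant, and evaluation at $\Gamma_5$ gives $c=1$. (Alternatively, (i) says $XG$ is antisymmetric, so $\det X=\mathrm{Pf}(XG)^2$ and one need only verify the quadratic identity $\mathrm{Pf}(XG)=\pm Q(x)$.) Second, your reality argument in (v)(c) is sketchier than the rest: ``$X'\bar X'$ scalar'' pins down each column of $L(R)$ only up to its own phase, so the passage to a single $U(1)$-valued homomorphism that perfectness kills needs more care; it is cleaner to observe that (ii) defines a real structure on the six-complex-dimensional space $\{Y:\,GYG=-{}^tY\}$ which $Y\mapsto RY\bar R^{-1}$ preserves precisely because $\det R=1$ (the $\epsilon$-tensor picks up a factor $\det R$). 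Also, in the Schur step the products $\Gamma_\alpha\bar\Gamma_\beta$ \emph{generate} $\mathrm{Mat}(4,\BC)$ as an algebra rather than span it, which is all you need. With those repairs the proposal is complete.
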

\begin{remark}
The meaning of {\normalfont (iii)} is that the mapping $x\mapsto \hat{X},$ where $\hat{X}$ is the antilinear operator on $C^4$ defined by $({\hat{X}v})^i={X^i}_j\,\bar{v}^j,$ is a Clifford map from $\BR^{4,2}$ to the algebra of all real-linear transformations of $\BC^4.$ The algebra $\mbox{Mat}(4,\BC),$ as an algebra over $\BR,$ can be then identified with the even Clifford subalgebra of $\BR^{4,2}.$
\end{remark}
\subsection{Compactified Minkowski space $M^c$ as a projective quadric in $\BR^{4,2}$}
Probably the most popular representation of $M^c$ that can be found in the literature is one where $M^c$ is defined as the set of generator lines of the cone (minus the origin $\{0\}$) \footnote{For a more general discussion of the case of signature $(r,s)$ see, for instance, \cite[Ch. 1.4.3]{angles}} $$\mathcal{C}=\{x\in\BR^{4,2}:x\neq 0,\quad Q(x)=0\}.$$ Or, in other words, it is the manifold of all one--dimensional isotropic subspaces  of $\BR^{4,2}.$ Or else, it is the cone $\mathcal{C}$ divided by the equivalence relation: $x\sim y$ if and only if $x=\lambda y,$ $\lambda\neq 0,\, \lambda\in \BR .$ We denote the resulting projective quadric, consisting of equivalence classes $[x]$ of non-zero isotropic vectors $x\in\BR^{4,2},$ by $[\mathcal{C}]=\mathcal{C}/\sim .$ It is now important to know the explicit relation between $M^c$ defined as $[\mathcal{C}]$ and $U(2).$ This is given by the following corollary to our Proposition \ref{pro:x}:
\begin{corollary}
For each $x\in \mathcal{C}$ the matrix \be U(x)=\frac{1}{x^4+ix^6}\begin{pmatrix}-x^3+ix^5&-x^1+ix^2\\-x^1-ix^2&x^3+ix^5\end{pmatrix}\label{eq:u}\ee
is unitary and depends only on the equivalence class $[x]$ of $x.$ We have
\be \det (U(x)-I)=-\frac{2i(x^5-x^6)}{x^4+ix^6}.\ee
Therefore $\det (U(x)-I)=0$ if and only if $x^5=x^6.$\label{cor:cor1}
\end{corollary}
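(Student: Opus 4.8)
The plan is to treat everything as a short, structured $2\times 2$ computation and to invoke the constraint $Q(x)=0$ only at the two places where it is actually needed. First I would rewrite $U(x)=M/d$ with denominator $d=x^4+ix^6$ and numerator
\be M=ix^5 I-\left(x^1\sigma_1+x^2\sigma_2+x^3\sigma_3\right),\ee
which one checks is exactly the matrix displayed in (\ref{eq:u}). Since both $M$ and $d$ are homogeneous of degree one in $x$, the ratio is invariant under $x\mapsto\lambda x$ for real $\lambda\neq 0$, so $U(x)$ depends only on $[x]$; this disposes of the well-definedness claim immediately. I would also record at the outset that $d\neq 0$ on $\mathcal{C}$: if $x^4=x^6=0$ then $Q(x)=0$ forces $(x^1)^2+(x^2)^2+(x^3)^2+(x^5)^2=0$ and hence $x=0$, contrary to $x\in\mathcal{C}$. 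Thus $U(x)$ is defined on all of $\mathcal{C}$ and the final equivalence will be clean.

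For unitarity I would use the Pauli algebra rather than multiply out entries. Because the $\sigma_k$ are Hermitian and $\left(x^1\sigma_1+x^2\sigma_2+x^3\sigma_3\right)^2=\left((x^1)^2+(x^2)^2+(x^3)^2\right)I$, one has $M^\dagger=-ix^5 I-\left(x^1\sigma_1+x^2\sigma_2+x^3\sigma_3\right)$, the cross terms cancel (the coefficient $x^5$ being real), and
\be MM^\dagger=\left((x^1)^2+(x^2)^2+(x^3)^2+(x^5)^2\right)I.\ee
This is the one and only place the hypothesis enters the first half: $Q(x)=0$ rewrites the scalar as $(x^4)^2+(x^6)^2=|d|^2$, whence $U(x)U(x)^\dagger=MM^\dagger/|d|^2=I$.

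For the determinant I would avoid expanding $\det(M-dI)$ head-on and instead use the $2\times 2$ identity $\det(U-I)=1-\mathrm{tr}\,U+\det U$. The trace is painless, $\mathrm{tr}\,U=2ix^5/d$, since the Pauli part is traceless; and $\det M=-\left((x^1)^2+(x^2)^2+(x^3)^2+(x^5)^2\right)=-|d|^2$ again by $Q(x)=0$, so $\det U=\det M/d^2=-\bar d/d$. Assembling, $\det(U-I)=\left(d-2ix^5-\bar d\right)/d$, and using $d-\bar d=2ix^6$ the numerator collapses to $2i(x^6-x^5)$, giving $\det(U(x)-I)=-2i(x^5-x^6)/(x^4+ix^6)$ as claimed. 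Since $d\neq 0$ on $\mathcal{C}$, this vanishes precisely when $x^5=x^6$.

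The computation has no genuine obstacle; the only real risk is bookkeeping, and the one step worth isolating is the collapse in the last paragraph, where the traceless-ness of the Pauli part and the cancellation $d-\bar d=2ix^6$ conspire to leave only the difference $x^6-x^5$. For conceptual completeness one could also note why $U(x)$ is unitary a priori: when $Q(x)=0$, part (iv) of Proposition \ref{pro:x} gives $\det X=0$, so the antilinear operator $\hat X$ of the Remark is degenerate, and its kernel is a totally isotropic plane, which in the block form of the preceding section reads $\left(\begin{smallmatrix}U(x)v\\v\end{smallmatrix}\right)$ with $U(x)$ unitary --- this is the structural reason behind the direct verification above.
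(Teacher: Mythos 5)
Your proposal is correct and matches the paper's intent: the paper simply asserts that the corollary follows ``by a straightforward calculation'' (offering the kernel of the antilinear Clifford representative $\hat{X}$ only as the conceptual origin of the formula), and your Pauli-algebra organization of that calculation --- homogeneity for well-definedness, $MM^\dagger=\bigl((x^1)^2+(x^2)^2+(x^3)^2+(x^5)^2\bigr)I$ together with $Q(x)=0$ for unitarity, and $\det(U-I)=1-\mathrm{tr}\,U+\det U$ for the determinant identity --- is exactly that computation, carried out cleanly and with the non-vanishing of $x^4+ix^6$ on $\mathcal{C}$ properly justified. Your closing remark about the kernel of $\hat{X}$ being a totally isotropic plane of the form $\left(\begin{smallmatrix}Uv\\ v\end{smallmatrix}\right)$ is precisely the ``deeper meaning'' the paper points to.
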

While the proof of this corollary is by a straightforward calculation, the deeper meaning of it is revealed by a study of the kernels of Clifford algebra representatives on a Clifford module, as discussed in \cite[Eq. (20)]{aj9a}. $\BC^4,$ when considered as $\BR^8,$ is a module for the Clifford algebra $Cl_{4,2},$ the map $x\mapsto \hat{X}$ being the Clifford map. One then computes the kernel of $\hat{X},$ which is then represented in the form $\left(\begin{smallmatrix}Uv\\v\end{smallmatrix}\right).$ By defining $U(x)=U$ one gets the formula (\ref{eq:u}).

\textbf{The compactified Minkowski space is this way represented as a projective quadric described by the equation $$Q(x)=(x^1)^2+(x^2)^2+(x^3)^2-(x^4)^2+(x^5)^2-(x^6)^2=0$$ in $\BR\BP^5.$ The conformal infinity is an intersection of this projective quadric with the projective hyperplane $$x^5=x^6.$$}
\section{The doubled conformal infinity as an elliptic supercyclide\label{sec:dci}}
The conformal infinity is a real algebraic variety described in homogeneous coordinates by two homogeneous equations: $Q(x)=(x^1)^2+(x^2)^2+(x^3)^2-(x^4)^2+(x^5)^2-(x^6)^2=0$ (compactified Minkowski space) and $x^5=x^6$ (the infinity hyperplane). In this section we will replace the equivalence relation in $\BR^6-\{0\}:$ $x\sim y$ iff $x=ry,\, r \neq 0,$ by a stronger one $x\approx y$ iff $x=ry,\, r>0.$ The doubled compactified Minkowski space $\tilde{M}^c$ is defined as the quotient of $\{x: Q(x)=0\}/\approx .$ \footnote{Topologically $M^c$ and $\tilde{M}^c$ are equivalent. Indeed $M^c$ is topologically $U(2)$ which is $(U(1)\times SU(2))/\{I,-I\}.$ $\tilde{M}^c$ is topologically $U(1)\times SU(2),$ (no quotient). But both spaces are homeomorphic, since $U(2)$ can be parametrized also as $S^1\times S^3:$ $U=\left(\begin{smallmatrix}z_1&-c\bar{z_2}\\z_2&c\bar{z}_2\end{smallmatrix}\right),$ $ |c|=1,\,|z_1|^2+|z_2|^2=1.$} We can embed now $M=\BR^{3,1},$ described by coordinates $(\bx,t)$ in $\tilde{M}^c$ in two ways:
$$\phi_+(\bx,t)=[(\bx,t,\frac{1}{2}(1-\bx^2+t^2),-\frac{1}{2}(1+\bx^2-t^2))],$$
$$\phi_-(\bx,t)=[(\bx,t,-\frac{1}{2}(1-\bx^2+t^2),\frac{1}{2}(1+\bx^2-t^2))].$$
The first embedding is characterized by the equation $x^5-x^6=1,$ the second one by  $x^5-x^6=-1.$ As we will see, in $\tilde{M}^c$ there are also two special, singular points: $[(\mathbf{0},0,1,1)]$ and  $[(\mathbf{0},0,-1,-1)].$
\subsection{Graphic representation as a needle horn \label{sec:gr}}
To obtain a geometric representation of the conformal infinity in $\tilde{M}^c$ consider the two defining equations written as
\be (x^1)^2+(x^2)^2+(x^3)^2+(x^5)^2=(x^4)^2+(x^6)^2,\label{eq:ss}\ee
\be x^5=x^6.\ee
Clearly the number $(x^1)^2+(x^2)^2+(x^3)^2+(x^5)^2=(x^4)^2+(x^6)^2$ is positive, it cannot be zero because that would imply $x=0, $ and the origin is excluded. Therefore we can always choose a unique positive scaling factor and get two equations in $\BR^6:$ $(x^1)^2+(x^2)^2+(x^3)^2+(x^5)^2=1,$ and $(x^4)^2+(x^6)^2=1.$ These are two intersecting cylinders. The infinity plane $x^5=x^6$ cuts this intersection effectively reducing the number of dimensions to 3. We obtain:
 \be (x^1)^2+(x^2)^2+(x^3)^2+(x^5)^2=1,\ee
 \be (x^4)^2+(x^5)^2=1.\ee
 In order to arrive at a graphics representation in $\BR^3$ we suppress one space dimension, say $x^3,$ so that two-spheres will be represented by circles. We are left now with four variables  $(x^1,x^2,x^4,x^5), $ and the intersection of two cylinders:
\be (x^1)^2+(x^2)^2+(x^5)^2=1,\label{eq:s1}\ee
\be (x^4)^2+(x^5)^2=1,\label{eq:s2}\ee
in $\BR^4.$ We now choose a light source in $\BR^4,$ a $3D$ box, and project our surface onto the box. For the light source we choose the point $x_0$ with coordinates $x^1=2,x^2=x^4=x^5=0$ (it can be easily verified that the whole represented body is contained inside a sphere of radius $1$), for the screen let us choose the space $(0,x^2,x^4,x^5).$ The screen will cut our surface, but this is not a problem. From now on let us call the screen variables $(x,y,z).$ The straight line in $\BR^4$ connecting the source $(2,0,0,0)$ with a point $(x^1,x^2,x^4,x^5)$ has the parametric equation: $$x(s)=(1-s)(2,0,0,0)+s(x^1,x^2,x^4,x^5)=(s(x^1-2)+2,sx^2,sx^4,sx^5).$$
It cuts the screen when $s(x^1-2)+2=0,$ therefore for $s=2/(2-x^1).$ This way it hits the screen at the point $(0,sx^2,sx^4,sx^5),$ which gives us the equations for the image:
\begin{align}
x(x^2,x^4,x^5)&=\frac{2x^2}{2-x^1}\\
y(x^2,x^4,x^5)&=\frac{2x^4}{2-x^1}\\
z(x^2,x^4,x^5)&=\frac{2x^5}{2-x^1}.
\end{align}
Let us choose now angular coordinates for the variables $x^1,x^2,x^4,x^5$ satisfying Eqs (\ref{eq:s1},\ref{eq:s2}). To satisfy (\ref{eq:s2}) we set
\begin{align} x^4&=\sin\,\Theta,\\x^5&=\cos\,\Theta.
\end{align}
Then, from (\ref{eq:s1}), we get $(x^1)+(x^2)^2=1-(x^4)^2=\cos^2\,\Theta,$ and as long $\cos\, \Theta \neq 0,$ (the two singular points), we can set, uniquely,
\begin{align} x^1&=\cos\,\Psi\cos\,\Theta,\\x^2&=\sin\,\Psi\cos\,\Theta .\end{align}
After substitution of these parametrization into the surface equation we get
\begin{align}
x(\Psi,\Theta)&=\frac{2\sin\,\Psi\cos\,\Theta}{2-\cos\,\Psi\cos\,\Theta}\\
y(\Psi,\Theta)&=\frac{2\sin\,\Theta}{2-\cos\,\Psi\cos\,\Theta}\\
z(\Psi,\Theta)&=\frac{2\cos\,\Theta}{2-\cos\,\Psi\cos\,\Theta}.
\end{align}
These are the equations of a degenerate {\it elliptic supercyclide\,} \cite[Eq. (11)]{garnier}, which is a slightly deformed Dupin's cyclide known under the names \textit{needle (horn) cyclide\,} \cite[Fig. 8, p. 83]{schrott}, \cite[Fig. 5.11, p. 158]{cecil1}, or, in French, \textit{double croissant sym\'{e}trique\,} \cite{ferreol}.
\begin{figure}[!ht]
\begin{center}
      \includegraphics[width=7cm, keepaspectratio=true]{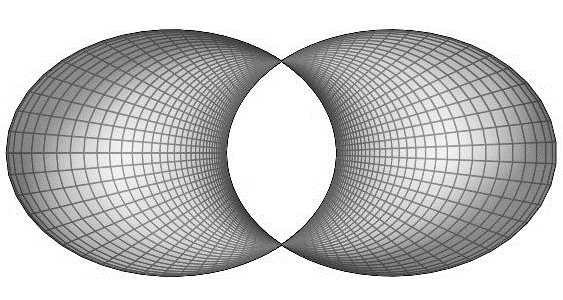}
\end{center}
  \caption{Pictorial representation of the doubled conformal infinity  with one dimension skipped (elliptic supercyclide).}
\label{fig:fig1}\end{figure}
The  simplest  form  of  the  cyclide  may  be  thought  of  as  a  deformed  torus,  in which  the  minor radius  varies  around  the  central  hole. In particular the  Dupin  cyclides  provide  a  generalization  of  all  the  surfaces  conventionally  used  in  solid  modeling - the  plane,  cylinder,  cone,  sphere  and  torus \cite{pratt}.
\section{Simple conformal infinity\label{sec:sci}}
By taking the quotient, as in section (\ref{sec:dci}), but by $\BR^*=\BR\setminus \{0\}$ rather than by $\BR^+,$ we arrive at the same equations (\ref{eq:s1},\ref{eq:s2}), but this time $x$ and $-x$ describe the same point.

Jakob Steiner has faced a similar problem when studying the method of representing the projective plane in $\BR^3.$ One possible solution was to use quadratic expressions in the coordinates - cf \cite{artmann} and \cite[p. 340]{hilbert}. Let us first follow a similar method. In order to represent the resulting variety graphically, we will need the following lemma:
\begin{Lemma}
With the notation as in section \ref{sec:gr} introduce the following variables:
\be
y_\alpha = x^\alpha x^4.
\ee
Then, assuming that $x^\alpha,x'^\alpha$ satisfy  (\ref{eq:s1}),(\ref{eq:s2}), we have $y_\alpha=y'_\alpha$ if and only if either $x^\alpha=x'^\alpha$ or $x^\alpha=-x'^\alpha,\, \alpha=1,...,5.$
\end{Lemma}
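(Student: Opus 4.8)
The plan is to separate the routine ``if'' direction from the substantive ``only if'' direction, and in the latter to let the single index $\alpha=4$ do the organizing work. For the ``if'' direction it suffices to observe that $y_\alpha=x^\alpha x^4$ is quadratic, hence even, in the point: replacing every $x^\alpha$ by $-x^\alpha$ simultaneously also sends $x^4\mapsto-x^4$, so each product $x^\alpha x^4$ is unchanged. Thus both $x^\alpha=x'^\alpha$ and $x^\alpha=-x'^\alpha$ visibly give $y_\alpha=y'_\alpha$, and nothing further is needed here.

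For the ``only if'' direction I would first read off the component $\alpha=4$ of the hypothesis $y_\alpha=y'_\alpha$, namely $(x^4)^2=(x'^4)^2$, so that $x^4=\pm x'^4$. Away from the locus $x^4=0$ the argument is then purely algebraic. If $x^4=x'^4\neq0$, dividing each equation $x^\alpha x^4=x'^\alpha x'^4$ by the common nonzero factor $x^4$ gives $x^\alpha=x'^\alpha$ for every $\alpha$; if instead $x^4=-x'^4\neq0$, the same division gives $x^\alpha=-x'^\alpha$. In both cases the global sign is fixed once and for all by the relation between $x^4$ and $x'^4$, so the stated dichotomy holds with a single consistent choice of sign.

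The one delicate point, and the place where I expect the real work to sit, is the degenerate stratum $x^4=0$. There the map $x\mapsto(y_\alpha)$ collapses, since every $y_\alpha=x^\alpha\cdot0$ vanishes identically, so the $y$-coordinates carry no information and division is unavailable. The remedy is to return to the defining equations: putting $x^4=0$ in (\ref{eq:s2}) forces $(x^5)^2=1$, and substituting this into (\ref{eq:s1}) forces the remaining spatial coordinates to vanish. Hence the fibre of the variety over $x^4=0$ consists of exactly the two points with $x^5=\pm1$ and all other coordinates zero; these are precisely the two singular points already encountered, and they are mutually antipodal. Since $(x^4)^2=(x'^4)^2$ forces $x'^4=0$ as soon as $x^4=0$, both $x$ and $x'$ lie in this two-point set, whence $x=x'$ or $x=-x'$. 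Assembling the three cases completes the argument.
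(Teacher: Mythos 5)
Your proof is correct and follows essentially the same route as the paper's: the evenness of the quadratic map for the ``if'' direction, division by the nonzero $x^4$ (with the global sign fixed by $(x^4)^2=(x'^4)^2$) in the generic case, and the constraint equations (\ref{eq:s2}) then (\ref{eq:s1}) to collapse the $x^4=0$ stratum to the two antipodal singular points. Your write-up is, if anything, slightly more explicit than the paper's one-line treatment of the nondegenerate case.
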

\begin{proof}The variables $y$ being quadratic in $x,$ it is clear that the 'if' part holds. Now suppose we have  $y_\alpha=y'_\alpha,\, \alpha=1,...,5.$ If $x^4=0,$ then $x'^4=0,$ therefore from (\ref{eq:s2}) we have that $x^5=\pm 1$ and $x'^5=\pm 1.$ It follows then from (\ref{eq:s1}) that $x^1=x^2=x^3=0,$ and the same for $x'.$ Therefore $x=(0,0,0,0,\pm 1)$ and  $x'=(0,0,0,0,\pm 1),$ thus $x'=\pm x.$ If $x^4\neq 0,$ then $x'^4/x^4=\pm 1$ and $y'_\alpha=(x'^4/x^4)y_\alpha.$\end{proof}
\subsection{Graphic representation}
To obtain a graphic representation we proceed as before and arrive, after renaming of the variables, at the following set of parametric equations
\begin{align}
x(\Psi,\Theta)&=\frac{2\cos^2\, \Psi}{2-\cos^2\,\Psi\cos\,\Theta}\\
y(\Psi,\Theta)&=\frac{2\cos^2\, \Psi\, \sin\, \Theta}{2-\cos^2\,\Psi\cos\,\Theta}\\
z(\Psi,\Theta)&=\frac{2\cos\, \Psi\, \sin\, \Theta}{2-\cos^2\,\Psi\cos\,\Theta}
\end{align}
The resulting surface has the shape of a simple elliptic supercyclide \textit{needle (horn) cyclide\,}  as in Fig. \ref{fig:fig2} - \cite[Fig. 6, p. 80]{schrott}, \cite[Fig. 5.7, p. 156]{cecil1}, or, in French, \textit{croissant simple\,} \cite{ferreol}. In $\BP^5$ the surface is, in fact, made of closed null geodesics, all intersecting at the point with homogeneous coordinates $({\bf 0},0,1,1)\sim  ({\bf 0},0,-1,-1).$ Each od these geodesics is uniquely determined by a point on the $2$-sphere $(\bn,1,0,0),$ $\bn^2=1.$ The geodesic is then given by the formula \be\gamma(\Psi)=[(\cos (\Psi)\bn,\cos{\Psi},\sin (\Psi),\sin(\Psi))],\quad \Psi\in [0,\pi]\label{eq:geo}\ee - cf. \cite[Eq. (16)]{ajci}. Taking another projection, switching the roles of  $x^1$ and $x^5,$ we arrive at a topologically equivalent, this time symmetric, representation - see Fig. \ref{fig:fig3}.
\begin{figure}[!ht]
\begin{center}
      \includegraphics[width=7cm, keepaspectratio=true]{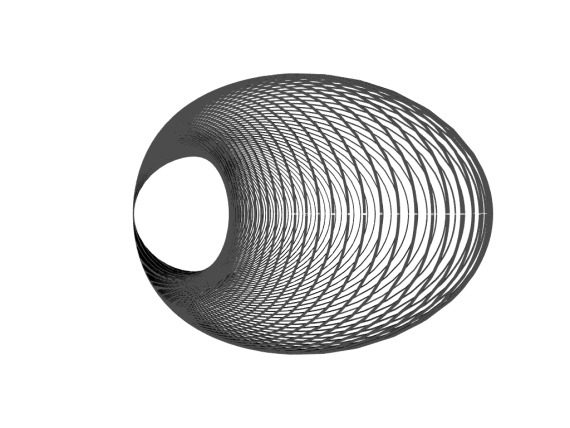}
\end{center}
  \caption{Pictorial representation of the simple conformal infinity  with one dimension skipped - needle cyclide, made of a one parameter family of null geodesics trapped at infinity.}
\label{fig:fig2}\end{figure}
\begin{figure}[!ht]
\begin{center}
      \includegraphics[width=7cm, keepaspectratio=true]{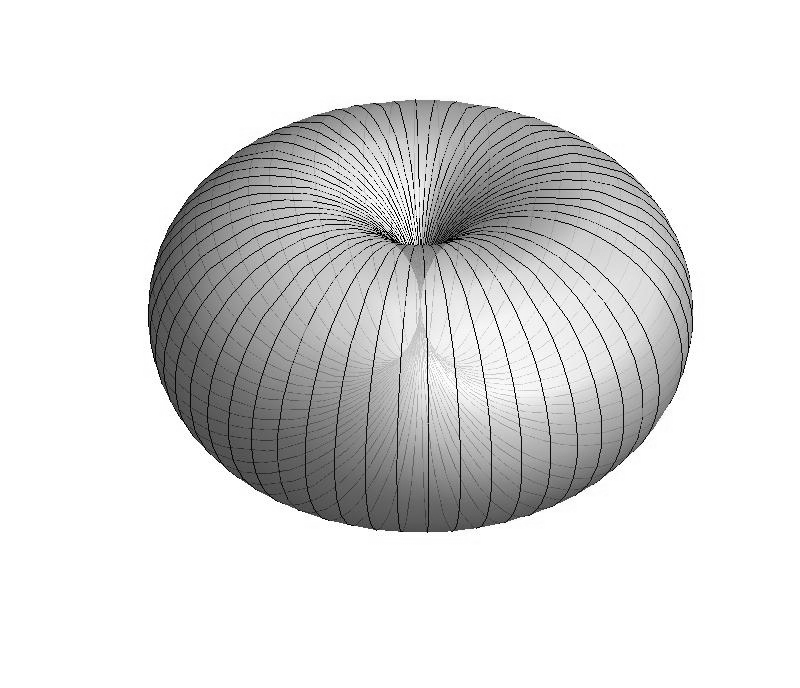}
\end{center}
  \caption{A symmetric representation of the simple conformal infinity, as a horned torus.}
\label{fig:fig3}\end{figure}
\section{Conformal infinity and Lie spheres\label{sec:ls}}
In 1872 Sophus Lie \cite{lie1} has formulated the geometry of oriented spheres in $\BR^3.$ It was further developed and generalized in the third volume of the monograph \cite{blaschke3} {\it `Differentialgeometrie der Kreise und Kugeln'\,}, published in 1929, by Wilhelm Blaschke. It's modern version is presented in {\it `Lie Sphere Geometry'\,} by Thomas E. Cecil \cite{cecil1}. Lie sphere geometry is concerned with the geometry of oriented spheres in $\BR^3$ (or, more generally in $\BR^n.$ An oriented sphere is a sphere with its radius vector pointing outwards (positive) or inwards (negative). A sphere of zero radius (no distinction between outwards and inwards) is just a point. An oriented  sphere of infinite radius is a plane - with its normal vector pointing in one or another direction. Added to points, oriented spheres, and oriented planes, is an exceptional point at infinity that makes $\BR^3$ into $S^3$ - it's one--point compactification. Formally, Lie sphere geometry is the study of the projective quadric $Q(x)=0$ and of the invariants of the action of $O(4,2)$ on this quadric.\\
 \noindent Blaschke \cite[p. 270]{blaschke3} noticed the relation of Lie sphere geometry to the Minkowski space of special relativity, but he did not elaborate much on this relation. The interpretation of relativistic space-time events in terms of Lie spheres can go as follows:  The radius $r$ can be interpreted as the radius of a spherical wave at time $t=r/c$, if the wave, propagating through space with the speed of light $c$, was emitted at $\bx,$ $|\bx|=r,$ at time $t=0.$ The image being that when the spherical wave reduces to a point, it turns itself inside--out, thus reversing its orientation.\\
 \noindent The correspondence between the constructs of Lie geometry in $\BR^{4,2}$ and geometrical objects in $\BR^3$ is given in the following table (adapted from \cite[p. 16]{cecil1}).\footnote{In \cite{ferap}  E. V. Ferapontov makes and interesting connection between Lie sphere geometry and twistor's formalism.}
\begin{table}[h!b!p!]
\caption{Correspondence between Lie spheres and points of the compactified Minkowski space. $[x]$ denotes the equivalence class modulo $\BR^*.$}
{\begin{tabular}{@{}cc@{}} 
\hline
\\[3pt]
\textbf{Euclidean}&\textbf{Lie}\\
points: $\bx\in\BR^3$&$[(\bx,0,\frac{1-\bx^2}{2},-\frac{1+\bx^2}{2},0,0)]$\\
\\[3pt]
$\infty$& $[(\mathbf{0},0,1,1)]$\\
\\[3pt]
spheres: center $\bx,$ signed radius $t$&$[(\bx,t,\frac{1-\bx^2+t^2}{2},-\frac{1+\bx^2-t^2}{2}$)]\\
\\[3pt]
planes: $\bx\cdot \bn=h,$ unit normal $\bn$&$(\bn,1,h,h)]$\\
\end{tabular}}
\end{table}
Conformal infinity of the Minkowski space consists of planes $\bx\cdot\bn=h,$ and of the point $\infty .$ According to Eq. (\ref{eq:geo}) all null geodesics trapped at infinity intersect at this special point, with $\Psi=\pi /2.$ For $\Psi\neq \pi /2$ the geodesic equation (\ref{eq:geo}) can be written as $\bx\cdot\bn=\tan\,\Psi.$ That means that a null geodesic trapped at infinity corresponds, in $\BR^3,$ to a family of parallel planes (plus $\infty$) - they represent light wave fronts - see Fig. \ref{fig:fig3}.
\begin{figure}[!ht]
\begin{center}
      \includegraphics[width=7cm, keepaspectratio=true]{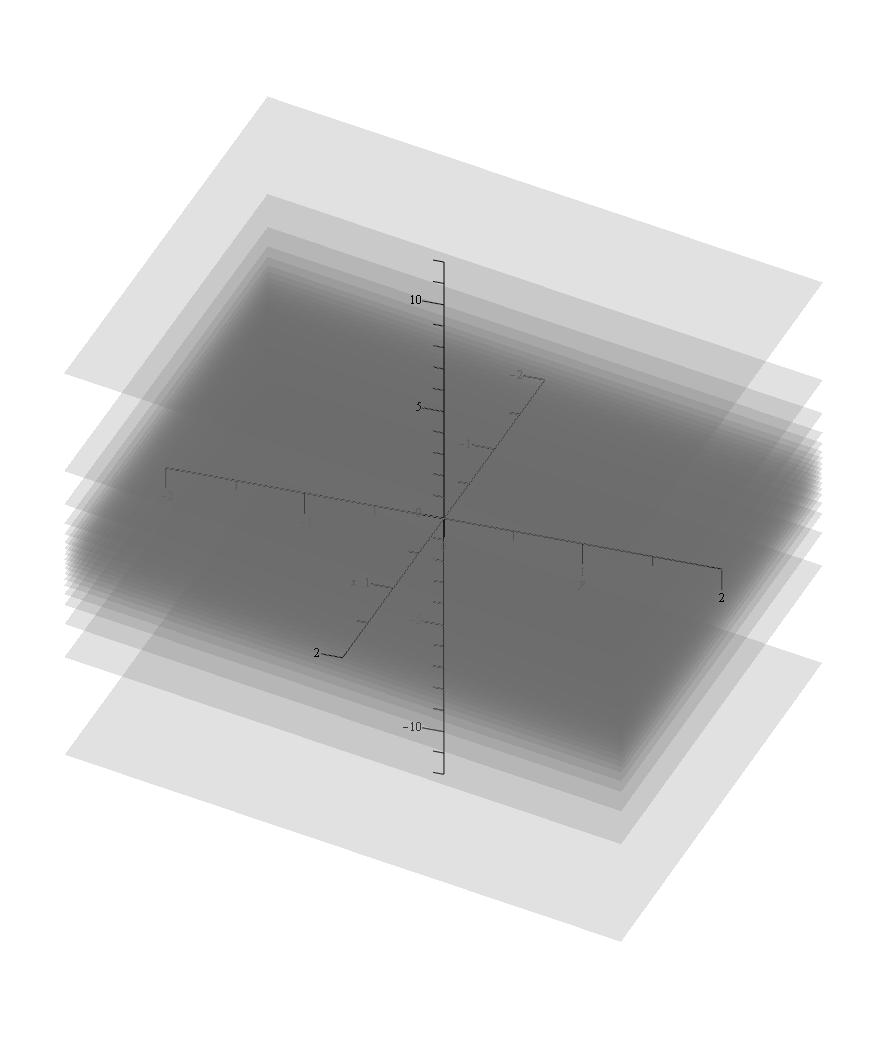}
\end{center}
  \caption{A family of plane fronts representing in $\BR^3$ a null geodesic $\bx\cdot\bn=\tan\,\Psi$ for $\bn=(1/\sqrt{2},0,1/\sqrt{2}),$ $\Psi=k*\pi/20,\,k=-9,...,9.$}
\label{fig:fig4}\end{figure}
The same family of fronts can be represented by the points on the null geodesic of the cyclide. In fact, there will be two geodesics, one for each of the two opposite orientation of planes - see Fig. \ref{fig:fig4}
\begin{figure}[!ht]
\begin{center}
      \includegraphics[width=7cm, keepaspectratio=true]{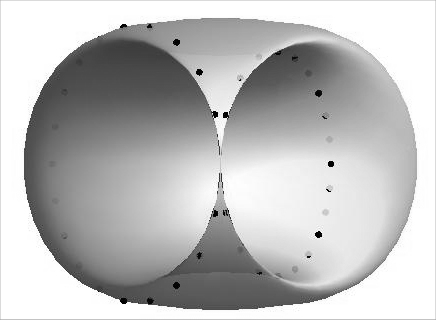}
\end{center}
  \caption{A family of points on the cyclide representing two null geodesics $\bx\cdot\bn=\tan\,\Psi$ for $\bn=\pm (1/\sqrt{2},0,1/\sqrt{2}),$ $\Psi=k*\pi/20,\,k=-9,...,9.$, this time viewed from a different perspective, so that the point $\infty$ is in front of the picture.}
\label{fig:fig5}\end{figure}
\subsection{Plane fronts}
Giving Minkowski's space conformal infinity the name of "the light cone at infinity" was unfortunate and misled even several expert authors of mathematical monographs. Is there a better picture? Using Eqs. (\ref{eq:s1},\ref{eq:s2}) we can parametrize conformal infinity by angle variables $\Psi\in [0,\pi],$ $\Theta,\Phi\in [0,2\pi]$ as follows:
\begin{align}
x^1&=\cos\, \Psi\,\sin\,\Theta\,\cos\,\Phi\\
x^2&=\cos\, \Psi\,\sin\,\Theta\,\sin\,\Phi\\
x^3&=\cos\, \Psi\,\cos\,\Theta\\
x^4&=\cos\,\Psi\\
x^5&=\sin\,\Psi ,
\end{align}
where we still need to identify $x$ with $-x.$ The whole information about the surface can be then expressed in terms of quadratic variables $y^i=x^ix^4,\, (i=1,2,3),$ and $y^4=x^5x^4.$ Thus conformal infinity is parametrized in $\BR^4$ as:
\begin{align}
y^1&=\cos^2\, \Psi\,\sin\,\Theta\,\cos\,\Phi\\
y^2&=\cos^2\, \Psi\,\sin\,\Theta\,\sin\,\Phi\\
y^3&=\cos^2\, \Psi\,\cos\,\Theta\\
y^4&=\cos\,\Psi\,\sin\, \Psi\\
\end{align}
By choosing stereographic projection with center at $(0,0,1,0)$ we can represent the family of null geodesics (parameter $\Psi$ varies along geodesics) in $\BR^3,$ missing only one point, as follows:
\begin{align}
x&=\cos^2\, \Psi\,\sin\,\Theta\,\cos\,\Phi/(1-\cos^2\, \Psi\,\cos\,\Theta)\\
y&=\cos^2\, \Psi\,\sin\,\Theta\,\sin\,\Phi/(1-\cos^2\, \Psi\,\cos\,\Theta)\\
z&=\cos\,\Psi\,\sin\, \Psi/(1-\cos^2\, \Psi\,\cos\,\Theta)
\end{align}
with the following graphic representation:
\begin{figure}[!ht]
\begin{center}
      \includegraphics[width=7cm, keepaspectratio=true]{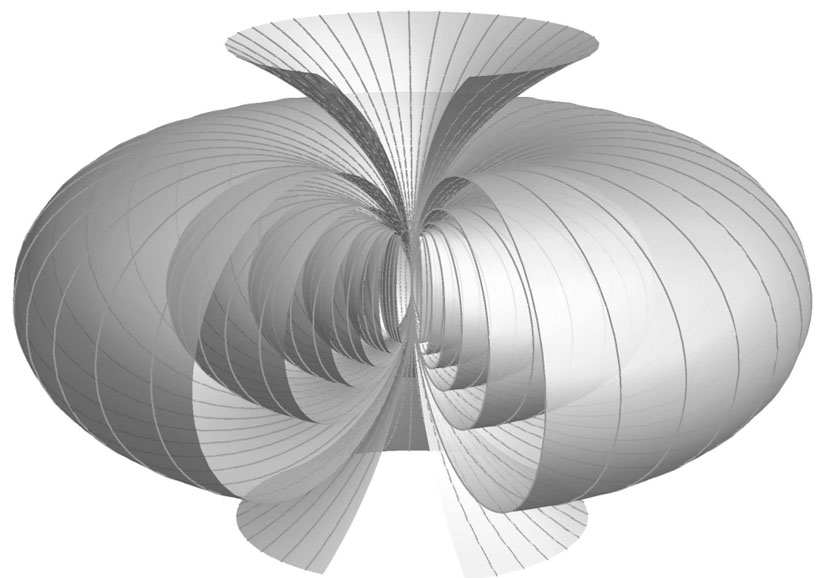}
\end{center}
  \caption{Conformal infinity represented in $\BR^3$}
\label{fig:fig6}\end{figure}
The figure resembles Clifford--Hopf fibration (cf. e.g. \cite[Fig. 33.15]{penroserd}, but is essentially different. The circles here are not the Villarceau circles (or `Clifford parallels') and the tori are limit tori with one common point - the point $\infty.$
\section{Compactified Minkowski space and it's conformal infinity in $1+1$ space--time dimensions}
In $1+1$ space--time dimensions, with coordinates $(x,t)$ the compactified Minkowski space is described, in $\BR^{2,2}$
with coordinates $(X,T,V,W),$ by equations (cf. Eq. (\ref{eq:ss}) \begin{align} X^2+V^2&=1,\nonumber\\
T^2+W^2&=1.\label{eq:xtvw}\end{align}
We should then identify $(X,T,V,W)$ with $(-X,-T,-V,-W).$ It is convenient to introduce complex variables $z_1=X+iV,$ $z_2=T+iW,$ with $|z_1|=|z_2|=1.$ The necessity of identification may seem, at first sight, to complicate the picturing of the surface. What we have is the Clifford torus quotiented by $Z_2$ action $f:(z_1,z_2)\mapsto (-z_1,-z_2).$ However, the following lemma is easy to prove.
\begin{Lemma}
The map $(z_1,z_2)\mapsto (z_1z_2,z_1\bar{z}_2)$ is a surjection from the Clifford torus onto itself. The counterimage of each point consists of exactly two points $(z_1,z_2)$ and $(-z_1,-z_2).$
\end{Lemma}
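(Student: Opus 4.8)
The plan is to work with the Clifford torus as the abelian group $U(1)\times U(1)=\{(z_1,z_2):|z_1|=|z_2|=1\}$ and to analyze $\Phi(z_1,z_2)=(z_1z_2,z_1\bar z_2)$ directly. First I would check that $\Phi$ genuinely maps the Clifford torus to itself, so that the phrase ``onto itself'' is meaningful: since $|z_1|=|z_2|=1$ we have $|z_1z_2|=|z_1\bar z_2|=1$, and thus the image point again lies on the torus.

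For the surjectivity and the fiber count together, the key step is to eliminate $z_2$. Given a target $(w_1,w_2)$ on the Clifford torus, I would try to solve $z_1z_2=w_1$ and $z_1\bar z_2=w_2$. Multiplying the two equations and using $z_2\bar z_2=1$ gives $z_1^2=w_1w_2$. Because $|w_1w_2|=1$, this equation has exactly two solutions $z_1$ on the unit circle, differing by a sign; and once $z_1$ is fixed the first equation forces $z_2=w_1\bar z_1$ uniquely. A short substitution then confirms that $(z_1,w_1\bar z_1)$ indeed satisfies both equations, so $\Phi$ is onto, and that replacing $z_1$ by $-z_1$ sends $z_2$ to $-z_2$. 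Hence every fiber is exactly $\{(z_1,z_2),(-z_1,-z_2)\}$, two distinct points because $z_1\neq -z_1$ for $z_1$ on the unit circle.

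Alternatively, and perhaps more transparently, I would observe that $\Phi$ is a group homomorphism of $U(1)\times U(1)$: a one-line check gives $\Phi\big((z_1,z_2)(w_1,w_2)\big)=\Phi(z_1,z_2)\,\Phi(w_1,w_2)$. Its kernel consists of the pairs with $z_1z_2=z_1\bar z_2=1$, which forces $z_2=\bar z_2$, hence $z_2=\pm1$ and then $z_1=\pm1$ with the matching sign, so $\ker\Phi=\{(1,1),(-1,-1)\}\cong\mathbb{Z}_2$. In additive angle coordinates $z_j=e^{i\theta_j}$ the map is the integer matrix $\left(\begin{smallmatrix}1&1\\1&-1\end{smallmatrix}\right)$ of determinant $-2$, acting on $\BR^2/(2\pi\mathbb{Z})^2$ as a covering of degree $|\det|=2$, hence surjective. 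Every fiber is then a coset of the kernel, namely $\{(z_1,z_2),(-z_1,-z_2)\}$, which recovers the claim.

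The computations are entirely elementary, so there is no serious obstacle here; the only point requiring a little care is to argue that the two preimages are genuinely distinct and that there is no third one. Both are settled at once by the elimination $z_1^2=w_1w_2$, which pins down $z_1$ up to sign and then determines $z_2$ completely, or equivalently by the fact that the kernel has order exactly two.
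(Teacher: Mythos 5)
Your proof is correct, and in fact the paper offers no proof at all to compare it against: the lemma is introduced with the remark that it ``is easy to prove'' and is then stated without argument, so your write-up supplies exactly what the author leaves to the reader. Both of your routes are sound. The elimination step is the heart of the matter: multiplying $z_1z_2=w_1$ by $z_1\bar z_2=w_2$ and using $z_2\bar z_2=1$ gives $z_1^2=w_1w_2$, which has precisely two unit-modulus solutions $\pm z_1$, after which $z_2=w_1\bar z_1$ is forced and the verification $z_1\bar z_2=z_1^2\bar w_1=w_2$ closes the loop; this simultaneously proves surjectivity and pins the fiber down to $\{(z_1,z_2),(-z_1,-z_2)\}$. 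Your second argument, viewing $\Phi$ as the endomorphism of $U(1)\times U(1)$ induced by the integer matrix $\left(\begin{smallmatrix}1&1\\1&-1\end{smallmatrix}\right)$ with kernel $\{(1,1),(-1,-1)\}$ and covering degree $|\det|=2$, is a cleaner structural explanation of the same fact and makes transparent why the quotient of the Clifford torus by $(z_1,z_2)\sim(-z_1,-z_2)$ is again a torus, which is precisely the use the paper makes of the lemma. Either version would serve as a complete proof; no gaps.
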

It follows that $M^c$ is, in our case, nothing else but the Clifford torus. We can represent it now in $\BR^3$ using stereographic projection, but is  more instructive to embed first the Minkowski space $M$ in $M^c.$ To this end we first embed $M$ into the isotropic cone of $\BR^{2,2},$ in the standard way (cf. Eq. (\ref{eq:s1})) $(x,t)\mapsto (x,t,v,w)=(x,t,(1-x^2+t^2)/2,-(1+x^2-t^2)/2).$ We then have, automatically, $x^2+v^2=t^2+w^2>0.$ In order to have (\ref{eq:xtvw}) satisfied we introduce normalized variables $(X,T,V,W)=(x,t,v,w)/\sqrt{t^2+w^2},$ then $z_1=X+iV,z_2=T+iW,$ and plot $(\Re (z_1z_2),\Im (z_1z_2),\Re (z_1\bar{z}_2),\Im (z_1\bar{z}_2))$ using stereographic projection from four to three dimensions with center at $(2,0,0,0).$ In Fig. \ref{fig:fig7} we plot this way the part of Minkowski space corresponding to the  rectangle $|x|\leq 20,|t|\leq 15.$
\begin{figure}[!ht]
\begin{center}
      \includegraphics[width=5cm,keepaspectratio=true]{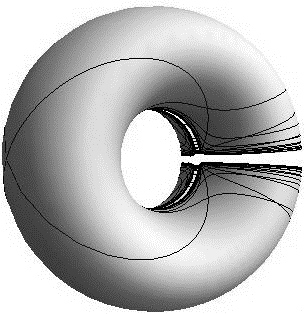}
\end{center}
  \caption{$1+1$ dimensional Minkowski space on the Clifford torus representing $M^c.$}
\label{fig:fig7}\end{figure}
The remaining part contains conformal infinity which, in this case, is represented by two circles with one common point: $\infty.$

In Segal's model \cite[Ch. III.5]{segal}, cf. also \cite{werth}, an important role is being played by the temporal evolution emerging from the action of the circle group on the $S^1\times S^3.$ For our $1+1$ dimensional model this action corresponds to the multiplication by $z_1.$ The corresponding orbits on $M^c$ are then Villarceau circles - see Fig. \ref{fig:fig8}:

\begin{figure}[!ht]
\begin{center}
      \includegraphics[width=5cm,keepaspectratio=true]{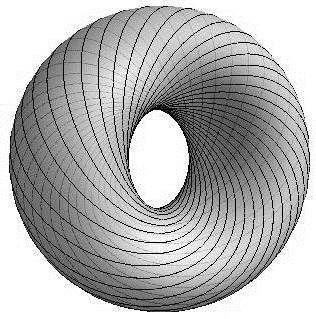}
\end{center}
  \caption{Trajectories of unispace Segal's dynamics on Clifford's torus representing $M^c.$}
\label{fig:fig8}\end{figure}
\section{Acknowledgments}
The author thanks Pierre Angl\`{e}s, Robert Coquereaux, Lionel Garnier, Marek Golasi\'{n}ski and Alexander Levichev for helpful comments, and also acknowledges support by Quantum Future Group.
\newpage

\end{document}